%% file: 2026_06_16_elsarticle_main_V6.tex
\documentclass[final,5p,times,twocolumn]{elsarticle}
\journal{Systems \& Control Letters}

\usepackage[utf8]{inputenc}
\usepackage{textcomp}
\usepackage{amssymb}
\usepackage{amsmath}
\usepackage{graphicx}
\usepackage[version=4]{mhchem}
\usepackage{siunitx}
\usepackage{longtable,tabularx}
\usepackage{algorithm}
\usepackage{algpseudocode}

\input{PackagesCommands}

\begin{document}

\begin{frontmatter}

\title{Exactness Certificates for Closed-Form CBF Safety-Filter Projections}

\author[umbc]{Ankit Goel\corref{cor1}}
\ead{ankgoel@umbc.edu}

\cortext[cor1]{Corresponding author.}

\affiliation[umbc]{
    organization={Department of Mechanical Engineering, University of Maryland, Baltimore County},
    addressline={1000 Hilltop Circle},
    city={Baltimore},
    postcode={21250},
    state={MD},
    country={USA}
}

\begin{abstract}
For control-affine systems, standard and high-order control barrier function conditions are affine in the control input and are commonly enforced through quadratic-program-based safety filters.
Although convex, these optimization problems may be undesirable in embedded, high-rate, or resource-limited implementations.
This letter characterizes when the corresponding Euclidean projection can be recovered from the affine inequalities violated by a nominal control input.
Given a nominal input, we form the violated set and compute the minimum-norm correction that enforces the violated inequalities with equality.
This violated-set correction is closed form, but it need not equal the exact Euclidean projection onto the full feasible set.
The main result gives a necessary and sufficient exactness certificate based on primal and dual feasibility, followed by structural sufficient conditions involving interactions among affine-inequality normals.
An online certification algorithm is then presented to determine when the closed-form update is exact.
When the certificate fails, a finite active-set search can be used to recover the exact projection.
Numerical simulations illustrate that the violated-set correction can remain feasible while failing to be the exact projection due to dual infeasibility, and demonstrate computational speedup relative to a standard CBF-QP solver.
\end{abstract}

\begin{keyword}
Control barrier functions \sep safety filters \sep quadratic programming \sep Euclidean projection \sep control-affine systems
\end{keyword}

\end{frontmatter}



\section{Introduction}

Control barrier functions (CBFs) provide a systematic approach for enforcing safety constraints in dynamic systems \cite{ames2019control}. 
In control-affine systems, with standard CBFs and high-order CBFs, the resulting safety condition is affine in the control input.
Thus, at each time instant, the safety filter is commonly implemented by solving a quadratic program that minimally modifies a nominal control input while enforcing the affine CBF inequalities \cite{ames2014control,ames2016control}.
CBF-QP safety filters have been widely used as real-time interventions that minimally modify nominal control inputs to enforce safety constraints \cite{ames2016control,wabersich2021predictive}.

Although CBF-QPs are convex, solving an optimization problem at every time step can be undesirable in embedded, high-rate, or resource-limited control applications.
This motivates the development of closed-form safety filters that preserve the Euclidean projection interpretation of the CBF-QP while avoiding a numerical optimizer whenever possible.

This letter characterizes one such construction.
Given a nominal control input, we identify the affine inequalities violated by that nominal
input and compute the minimum-norm correction that enforces those inequalities with equality. 
In general, this violated-set correction need not coincide with the exact Euclidean projection onto the feasible set.
The main contribution of this letter is to characterize when it does. 

Specifically, we derive a necessary and sufficient exactness certificate based on primal and dual feasibility, provide a structural sufficient condition involving interactions between affine-inequality normals, and obtain geometric special cases.
We also present an online certification procedure for determining when the closed-form update is exact, together with a finite active-set search for recovering the exact projection when the certificate fails.

Related work on closed-form and explicit CBF safety filters has grown recently.
A closed-form expression for the CBF-QP solution is explored in~\cite{mestres2025explicit} by partitioning the state space into regions, each admitting a distinct closed-form optimizer, and using this partition offline to avoid calling a numerical solver between region changes. 
The present letter differs in that no offline partitioning is required: given a nominal input at any state, the violated-set correction is computed online and immediately certified through primal and dual feasibility conditions. 
Compatibility of multiple CBF constraints under input bounds is studied in~\cite{tan2022compatibility} by establishing conditions under which the CBF-QP is well-defined when multiple constraints are simultaneously active.
The present work instead takes feasibility as given and characterizes when the violated-set correction coincides with the exact Euclidean projection.

The remainder of this letter is organized as follows.
{Section~\ref{sec:cbf_projection_filter} formulates the projection-based CBF safety filter and introduces the violated-set correction.}
{Section~\ref{sec:main_result} presents the exactness certificate, structural sufficient condition, and geometric special cases.}
{Section~\ref{sec:algorithm} presents the online certification algorithm.}
{Section~\ref{sec:active_set_search} presents the finite active-set search for recovering the exact projection when the certificate fails.}
{Section~\ref{sec:numerical_simulation} illustrates the certificate and active-set recovery mechanism in simulation, and Section~\ref{sec:conclusion} concludes the letter.}


\section{Projection-Based CBF Safety Filters and Violated-Set Corrections}
\label{sec:cbf_projection_filter}

This section formulates the projection problem induced by affine CBF constraints and introduces the closed-form violated-set correction studied in the remainder of the paper.
For control-affine systems, standard CBF and HOCBF conditions are affine in the control input, and the corresponding safety filter is commonly implemented as the Euclidean projection of a nominal input onto a polyhedral feasible set.
The question addressed here is whether this projection can be recovered directly from the affine inequalities that are violated by the nominal input.
The resulting violated-set correction is closed form, but it is not always equal to the exact Euclidean projection onto the full feasible set. 
The purpose of the following formulation is to define this candidate precisely and set up the exactness certificates developed in the next section.

Consider the control-affine system
\begin{align}
    \dot x
        =
            f(x) + g(x)u,
\end{align}
where $x\in \BBR^n$ is the state and $u \in \BBR^m$ is the control input.
For systems that are affine in the control input, standard CBF and HOCBF conditions are also affine in the control input \cite{xiao2019control,xiao2021high}. 
Therefore, the resulting safety condition can be written in the linear inequality form
\begin{align}
    A(x)u
        \le
            b(x),
    \label{eq:cbf_constraint}
\end{align}
where
$A(x) \in \BBR^{q\times m}$
and 
$b(x) \in \BBR^{q}.$

Note that the matrices $A(x)$ and $b(x)$ depend on the chosen CBF or HOCBF construction and on the system dynamics.
Thus, the conditions derived below are conditions on the control-space affine inequalities induced by the state constraints, rather than direct geometric conditions on the desired safe set.

Let $u_{\rm nom}$ denote a nominal control input.
The standard CBF safety filter is the Euclidean projection
\begin{align}
    u_{\rm safe}
        =
            \argmin_u
            \frac12
            \|u-u_{\rm nom}\|_2^2
\end{align}
subject to \eqref{eq:cbf_constraint}.
This is the Euclidean projection of $u_{\rm nom}$ onto the polyhedral feasible set defined by the affine CBF inequalities.




Since the results below depend only on the affine inequalities in control space, we suppress the dependence on $x$ and consider the abstract polyhedral set
\begin{align}
    \SU
        \isdef
            \left\{
                u \in \real^m
                :
                Au \le b
            \right\},
    \label{eq:feasibleSet}
\end{align}
where
$A
        \in
            \real^{q\times m},$
and 
$b
        \in
            \real^q.$
Let
$u_{\rm nom}
        \in
            \real^m$
denote a nominal control input that may not satisfy the constraints defining $\SU$.
%
The Euclidean projection of $u_{\rm nom}$ onto $\SU$ is
\begin{align}
    u^*
        \isdef
            \argmin_{u \in \SU}
            \frac{1}{2}
            \|
                u-u_{\rm nom}
            \|_2^2.
    \label{eq:projection_qp}
\end{align}
The safety filter is thus a Euclidean projection onto a polyhedral feasible set, a standard convex quadratic program \cite{boyd2004convex}.

Define the constraint residual vector
\begin{align}
    r
        \isdef
            A u_{\rm nom}
            -
            b,
\end{align}
the violated constraint set
\begin{align}
    \SV
        \isdef
            \left\{
                i \in {1,\ldots,q}
                :
                r_i > 0
            \right\}.
\end{align}
and the complement of the violated constraint set 
\begin{align}
    \SV^c
        \isdef
            \{1,\ldots,q\}
            \setminus
            \SV.
\end{align}
Let $A_{\SV}$ and $A_{\SV^c}$ denote the submatrices formed by the rows of $A$ corresponding to the indices in $\SV$ and $\SV^c$, respectively. Likewise, let $b_{\SV}$, $b_{\SV^c}$, $r_{\SV}$, and $r_{\SV^c}$ denote the corresponding subvectors.


Note that if \(\SV=\emptyset\), then \(u_{\rm nom}\in\SU\) and \(u^*=u_{\rm nom}\). Hence, the nontrivial case is \(\SV\neq\emptyset\).
Next, assume that $A_{\SV}$ has full row rank.
The violated-set correction is defined by projecting $u_{\rm nom}$ onto the affine subspace obtained by enforcing the violated inequalities as equalities, that is, 
\begin{align}
    u_{\SV}
        &\isdef
            u_{\rm nom}
            -
            A_{\SV}^T
            \left(
                A_{\SV}A_{\SV}^T
            \right)^{-1}
            \left(
                A_{\SV}u_{\rm nom}
                -
                b_{\SV}
            \right).
    \label{eq:violated_projection}
\end{align}
By construction, $u_{\SV}$ is the orthogonal projection of $u_{\rm nom}$ onto the affine set
\begin{align}
A_{\SV}u = b_{\SV}.
\end{align}
However, this affine set is generally not equal to the full feasible set $\SU$.
Therefore, $u_{\SV}$ need not coincide with the Euclidean projection $u^*$.

The objective of this paper is to determine conditions under which the violated-set correction coincides with the exact Euclidean projection, that is,
\begin{align}
    u_{\SV}
        =
            u^*.
\end{align}

When this condition holds, the CBF-QP projection can be recovered from the currently violated affine inequalities.
When it fails, the violated-set correction is not certified to be the closest feasible input, and 
an exact recovery step, such as the finite active-set search in Section~\ref{sec:active_set_search}, is required.

\section{Exactness of the Violated-Set Correction}
\label{sec:main_result}

This section characterizes when the violated-set correction defined in Section~\ref{sec:cbf_projection_filter} coincides with the exact Euclidean projection onto the full feasible set.
The key point is that the affine inequalities violated by $u_{\rm nom}$ need not coincide with the active constraints of the projection problem.
Consequently, enforcing the violated inequalities as equalities may produce a candidate that is either infeasible with respect to previously satisfied inequalities or feasible but not closest to $u_{\rm nom}$. 
The results below distinguish these cases through primal and dual feasibility.

\subsection{Exactness Certificate}

The following result gives a necessary and sufficient condition for the violated-set correction to coincide with the Euclidean projection onto the full feasible set.

The full-row-rank assumption on \(A_{\SV}\) in the following result is a nondegeneracy condition on the affine CBF inequalities in control space.
It requires the violated constraint normals induced by the CBF conditions to be linearly independent. 
This condition is distinct from the geometry of the state-space safe set; for example, a box-shaped safe set in the state variables need not induce a box-shaped feasible set in the control input.

\begin{theorem}[Exactness Certificate]
\label{thm:exactness_characterization}
Assume that $A_{\SV}$ has full row rank.
Then $u_{\SV}$, given by \eqref{eq:violated_projection}, is the Euclidean projection of $u_{\rm nom}$ onto $\SU$ if and only if
\begin{align}
    Au_{\SV}
        \le
            b
    \label{eq:exactness_primal_feasibility}
\end{align}
and
\begin{align}
    \left(
        A_{\SV}A_{\SV}^T
    \right)^{-1}
    \left(
        A_{\SV}u_{\rm nom}
        -
        b_{\SV}
    \right)
        \ge
            0.
    \label{eq:exactness_dual_feasibility}
\end{align}
\end{theorem}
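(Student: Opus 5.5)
The argument rests on the KKT conditions for the projection problem \eqref{eq:projection_qp}. Because the objective is strictly convex and the constraints are affine, no constraint qualification is needed beyond linearity. A point $\bar u$ is therefore the unique minimizer $u^*$ if and only if there exists $\lambda\in\real^q$ satisfying
\begin{align*}
    \bar u - u_{\rm nom} + A^T\lambda = 0,\quad A\bar u\le b,\quad \lambda\ge 0,\quad \lambda_i (A\bar u-b)_i = 0 \ \ \forall i.
\end{align*}
Set $\mu \isdef (A_{\SV}A_{\SV}^T)^{-1}(A_{\SV}u_{\rm nom}-b_{\SV})$, which is well defined by the full-row-rank assumption. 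Then \eqref{eq:violated_projection} reads $u_{\SV}=u_{\rm nom}-A_{\SV}^T\mu$, and by construction $A_{\SV}u_{\SV}=b_{\SV}$.

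For sufficiency, assume \eqref{eq:exactness_primal_feasibility} and \eqref{eq:exactness_dual_feasibility}. Define the candidate multiplier $\lambda$ by $\lambda_{\SV}=\mu$ and $\lambda_{\SV^c}=0$. Stationarity holds because $A^T\lambda=A_{\SV}^T\mu$. Primal feasibility is \eqref{eq:exactness_primal_feasibility}, and dual feasibility is \eqref{eq:exactness_dual_feasibility} together with $\lambda_{\SV^c}=0$. Complementary slackness holds on $\SV$ because those rows are tight at $u_{\SV}$, and trivially on $\SV^c$ because those multipliers vanish. The KKT conditions are therefore satisfied, and $u_{\SV}=u^*$.

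For necessity, assume $u_{\SV}=u^*$. Then \eqref{eq:exactness_primal_feasibility} holds because $u^*\in\SU$. The nontrivial part is recovering $\lambda^*_{\SV}=\mu$ from an arbitrary KKT multiplier $\lambda^*$, which is where I expect the main obstacle. My plan is to show $\lambda^*_{\SV^c}=0$ and then use injectivity of $A_{\SV}^T$. Let $i\in\SV^c$, so $(Au_{\rm nom}-b)_i\le 0$. From stationarity, $u_{\rm nom}-u^*=A^T\lambda^*=A_{\SV}^T\mu$. Complementary slackness gives $\lambda_i^*(a_i^Tu^*-b_i)=0$. I would not argue row by row. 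Instead, I would form the aggregate
\[
\sum_{i\in\SV^c}\lambda_i^*(a_i^Tu_{\rm nom}-b_i),
\]
which is $\le 0$ because each term is a nonnegative multiplier times a nonpositive residual. Writing $a_i^Tu_{\rm nom}-b_i=(a_i^Tu^*-b_i)+a_i^T(u_{\rm nom}-u^*)$, the first part vanishes by complementary slackness. The aggregate therefore equals $(\lambda^*_{\SV^c})^TA_{\SV^c}(u_{\rm nom}-u^*)$.

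Next I use $u_{\rm nom}-u^*=A_{\SV}^T\lambda^*_{\SV}+A_{\SV^c}^T\lambda^*_{\SV^c}$, which gives
\[
\|u_{\rm nom}-u^*\|^2=(\lambda^*_{\SV})^TA_{\SV}(u_{\rm nom}-u^*)+(\lambda^*_{\SV^c})^TA_{\SV^c}(u_{\rm nom}-u^*).
\]
Since $A_{\SV}(u_{\rm nom}-u^*)=r_{\SV}$, this becomes
\[
\|A_{\SV}^T\mu\|^2=(\lambda^*_{\SV})^Tr_{\SV}+(\text{a quantity}\le 0).
\]
I would combine this with $\|A_{\SV}^T\mu\|^2=\mu^Tr_{\SV}$, which follows from the definition of $\mu$. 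That yields the inequality $\mu^Tr_{\SV}\le(\lambda^*_{\SV})^Tr_{\SV}$, but this alone does not force $\lambda^*_{\SV^c}=0$.

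If the aggregate argument stalls, my first fallback is a uniqueness argument. Note that $\lambda=(\mu,0)$ satisfies stationarity, primal feasibility and complementary slackness, so the only missing property is $\mu\ge 0$. I would then show $\mu\ge0$ directly via a perturbation argument. Suppose $\mu_j<0$ for some $j\in\SV$. Move from $u^*$ along a direction $d$ with $a_j^Td<0$ and $a_k^Td=0$ for $k\in\SV\setminus\{j\}$; such a $d$ exists by full row rank. Then $\langle u^*-u_{\rm nom},d\rangle=-\mu_ja_j^Td<0$, so small steps along $d$ decrease the distance to $u_{\rm nom}$. It remains to check that these steps stay in $\SU$ for the constraints in $\SV^c$ that are tight at $u^*$. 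This is exactly where the residual sign information $r_i\le 0$ on $\SV^c$ must enter, and I expect this to be the delicate step. If it cannot be closed in general, I would add a nondegeneracy condition, namely that no $\SV^c$ constraint is active at $u_{\SV}$, or a linear independence condition on the active rows. Under such a condition the KKT multiplier is unique and $\mu\ge0$ follows immediately.
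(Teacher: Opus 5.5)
Your sufficiency argument is correct and matches the paper's: the multiplier $(\mu,0)$ satisfies stationarity and complementarity by construction, and the two displayed conditions supply primal and dual feasibility. The necessity direction is where your proposal has a gap, and you cannot close it. The step you flagged, that every KKT multiplier at $u^*=u_{\SV}$ must equal $(\mu,0)$, is false, and so is the ``only if'' claim. The paper's proof skips exactly this point. It asserts that the KKT conditions hold if and only if the two conditions hold, but that is true only for the particular multiplier $(\mu,0)$. Optimality of $u_{\SV}$ guarantees only the existence of \emph{some} multiplier. Here is a counterexample with $m=2$, $q=3$:
\[
A=\begin{bmatrix}1&-1\\1&-2\\-1&3\end{bmatrix},\qquad b=0,\qquad u_{\rm nom}=\begin{bmatrix}1\\0\end{bmatrix}.
\]
Then $r=(1,1,-1)^T$, so $\SV=\{1,2\}$, $A_{\SV}$ is invertible, and $u_{\SV}=0$, which satisfies $Au_{\SV}\le b$. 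Since $3A_2u+2A_3u=u_1$, every $u\in\SU$ has $u_1\le0$. Hence $\|u-u_{\rm nom}\|\ge1=\|0-u_{\rm nom}\|$, and $u^*=0=u_{\SV}$. Yet
\[
(A_{\SV}A_{\SV}^T)^{-1}r_{\SV}=\begin{bmatrix}5&-3\\-3&2\end{bmatrix}\begin{bmatrix}1\\1\end{bmatrix}=\begin{bmatrix}2\\-1\end{bmatrix},
\]
which has a negative entry. The feasible set is a wedge with nonempty interior and $r_3<0$ strictly, so the failure does not come from ties or a degenerate $\SU$. The actual multiplier is $\lambda^*=(0,3,2)^T$. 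It loads on row~3, which $u_{\rm nom}$ satisfies strictly but which is tight at $u^*$. Both of your attempts break here:
\begin{itemize}
\item The aggregate identity only gives $\mu^Tr_{\SV}\le(\lambda^*_{\SV})^Tr_{\SV}$ (here $1\le3$).
\item Your perturbation direction $d=(1,1)^T$ (with $A_1d=0$, $A_2d<0$) is a descent direction, but $A_3d=2>0$, so it leaves $\SU$ immediately. The sign information $r_{\SV^c}\le0$ cannot prevent this.
\end{itemize}

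The correct conclusion is the one in your final sentence: necessity needs an additional hypothesis. Either of the ones you propose works, by the argument you sketch.
\begin{itemize}
\item If no row of $A_{\SV^c}$ is active at $u_{\SV}$, complementary slackness forces $\lambda^*_{\SV^c}=0$. Then $A_{\SV}^T\lambda^*_{\SV}=u_{\rm nom}-u^*=A_{\SV}^T\mu$, and injectivity of $A_{\SV}^T$ gives $\mu=\lambda^*_{\SV}\ge0$.
\item If all rows active at $u^*$ are linearly independent, the same injectivity argument applied to the full active matrix gives $\lambda^*=(\mu,0)$.
\end{itemize}
Without such a hypothesis, the result is provable only as a sufficient condition. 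That is all the online certification algorithm actually uses, so in the example above the algorithm is conservative rather than unsafe. It rejects an exact candidate and falls back to the active-set search, which certifies $\mathcal I=\{2,3\}$. Rather than trying harder to close the perturbation step, record the counterexample and state the theorem with the extra hypothesis.
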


\begin{nonumberproof}
Since $A_{\SV}$ has full row rank,
\eqref{eq:violated_projection} is the orthogonal projection of $u_{\rm nom}$ onto the affine set $A_{\SV}u = b_{\SV}.$
Therefore, $A_{\SV}u_{\SV} = b_{\SV}.$

Define
\begin{align}
    \lambda_{\SV}
        \isdef
            \left(
                A_{\SV}A_{\SV}^T
            \right)^{-1}
            \left(
                A_{\SV}u_{\rm nom}
                -
                b_{\SV}
            \right),
\end{align}
and, for $i \notin \SV,$ set $\lambda_i = 0.$
Then, \eqref{eq:violated_projection} implies
\begin{align}
    u_{\SV}
        -
        u_{\rm nom}
        +
        A^T\lambda
        =
            0.
\end{align}
Thus, stationarity holds.
Moreover, complementarity holds because
$A_{\SV}u_{\SV}=b_{\SV}$ and $\lambda_i=0$ for all $i\notin\SV$.
Therefore, the KKT conditions hold if and only if
\eqref{eq:exactness_primal_feasibility} and
\eqref{eq:exactness_dual_feasibility} hold.
Since the objective in \eqref{eq:projection_qp} is strictly convex and $\SU$ is convex, these conditions are necessary and sufficient for $u_{\SV}=u^\ast$.
\end{nonumberproof}

Theorem~\ref{thm:exactness_characterization} separates the two ways in which the violated-set correction can fail to be exact. 
Condition~\eqref{eq:exactness_primal_feasibility} checks whether enforcing the violated inequalities has made any previously satisfied inequality infeasible.
Condition~\eqref{eq:exactness_dual_feasibility} checks whether the violated inequalities can serve as active constraints of the Euclidean projection with nonnegative KKT multipliers. 
Thus, a candidate can be feasible and still fail to be the closest feasible input if the associated multiplier vector is not componentwise nonnegative.

\subsection{Structural Sufficient Condition}

The exactness certificate in Theorem~\ref{thm:exactness_characterization} can be checked directly after computing the violated-set correction. 
We next give a sufficient condition that guarantees this certificate using only algebraic interactions among the affine-inequality normals.
The first condition ensures nonnegativity of the multiplier candidate associated with the violated constraints.
The second condition ensures that the violated-set correction does not increase the residuals of constraints that were already satisfied by \(u_{\rm nom}\).

\begin{theorem}[Structural Exactness Condition]
\label{thm:exactness}
Assume that $A_{\SV}$ has full row rank.
Furthermore, assume that
\begin{align}
    \left(
        A_{\SV}A_{\SV}^T
    \right)^{-1}
        \ge
            0,
    \label{eq:inverse_positive_condition}
\end{align}
and
\begin{align}
    A_{\SV^c}A_{\SV}^T
    \left(
        A_{\SV}A_{\SV}^T
    \right)^{-1}
        \ge
            0,
    \label{eq:feasibility_preservation_condition}
\end{align}
where the inequalities are interpreted componentwise.
Then,
\begin{align}
    u_{\SV}
        =
            u^*.
    \label{eq:main_result}
\end{align}

\end{theorem}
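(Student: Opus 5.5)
The plan is to reduce the claim to Theorem~\ref{thm:exactness_characterization}. Since $A_{\SV}$ has full row rank, it suffices to verify the primal feasibility condition \eqref{eq:exactness_primal_feasibility} and the dual feasibility condition \eqref{eq:exactness_dual_feasibility}. Both checks will use the residual vector $r$. By definition of $\SV$ we have $r_{\SV} = A_{\SV}u_{\rm nom} - b_{\SV} > 0$ componentwise, while $r_{\SV^c} \le 0$.

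Dual feasibility follows directly from \eqref{eq:inverse_positive_condition}. The multiplier candidate is $\lambda_{\SV} = (A_{\SV}A_{\SV}^T)^{-1} r_{\SV}$. It is the product of a componentwise nonnegative matrix and a componentwise positive vector, so $\lambda_{\SV} \ge 0$, which is \eqref{eq:exactness_dual_feasibility}.

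For primal feasibility, we split the rows of $A$ into $\SV$ and $\SV^c$. On $\SV$, the construction \eqref{eq:violated_projection} gives $A_{\SV}u_{\SV} = b_{\SV}$, so these inequalities hold with equality. On $\SV^c$, we substitute \eqref{eq:violated_projection} to get
\begin{align*}
    A_{\SV^c}u_{\SV} - b_{\SV^c}
        =
            r_{\SV^c}
            -
            A_{\SV^c}A_{\SV}^T\left(A_{\SV}A_{\SV}^T\right)^{-1} r_{\SV}.
\end{align*}
The first term is nonpositive because $r_{\SV^c}\le 0$. In the second term, the matrix $A_{\SV^c}A_{\SV}^T(A_{\SV}A_{\SV}^T)^{-1}$ is componentwise nonnegative by \eqref{eq:feasibility_preservation_condition}, and $r_{\SV}>0$, so the subtracted vector is nonnegative. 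Hence $A_{\SV^c}u_{\SV} \le b_{\SV^c}$, and together with the equalities on $\SV$ this gives \eqref{eq:exactness_primal_feasibility}.

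With both conditions established, Theorem~\ref{thm:exactness_characterization} yields $u_{\SV}=u^*$. No step here is a real obstacle. The only points needing care are the sign conventions in the $\SV^c$ computation, namely that the correction subtracts $A^T\lambda$, and noting that if $\SV^c=\emptyset$ condition \eqref{eq:feasibility_preservation_condition} is vacuous and primal feasibility holds from the equalities on $\SV$ alone.
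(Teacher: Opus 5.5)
Your proposal is correct and matches the paper's proof essentially step for step. You get primal feasibility by splitting the rows into $\SV$ and $\SV^c$ and using \eqref{eq:feasibility_preservation_condition} together with $r_{\SV}>0$ and $r_{\SV^c}\le 0$, get dual feasibility from \eqref{eq:inverse_positive_condition}, and conclude via Theorem~\ref{thm:exactness_characterization}; the only differences are cosmetic (you check dual feasibility first and note the vacuous case $\SV^c=\emptyset$).
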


\begin{proof}
Since $\SV$ is the violated constraint set, it follows that 
$r_{\SV} > 0,$ 
and 
$r_{\SV^c} \le 0,$
where the inequalities are interpreted componentwise.

Using $u_\SV$ given by \eqref{eq:violated_projection}, it follows that 
\begin{align}
    A_{\SV}u_{\SV}-b_{\SV}
        &=
            r_{\SV}
            -
            A_{\SV}A_{\SV}^T
            \left(
                A_{\SV}A_{\SV}^T
            \right)^{-1}
            r_{\SV}
       =
            0 .
    \label{eq:violated_residual_after_projection}
\end{align}
For the remaining constraints,
\begin{align}
    A_{\SV^c}u_{\SV}-b_{\SV^c}
        &=
            r_{\SV^c}
            -
            A_{\SV^c}A_{\SV}^T
            \left(
                A_{\SV}A_{\SV}^T
            \right)^{-1}
            r_{\SV}.
    \nn
\end{align}
Using \eqref{eq:feasibility_preservation_condition} and the fact that  $r_{\SV}>0$, it follows that 
\begin{align}
    A_{\SV^c}A_{\SV}^T
    \left(
        A_{\SV}A_{\SV}^T
    \right)^{-1}
    r_{\SV}
        \ge
            0.
    \nn
\end{align}
Therefore,
\begin{align}
    A_{\SV^c}u_{\SV}-b_{\SV^c}
        \le
            r_{\SV^c}
        \le
            0 .
    \nn
\end{align}
Together with \eqref{eq:violated_residual_after_projection}, this gives
\begin{align}
    Au_{\SV}
        \le
            b .
    \nn
\end{align}
Thus, $ u_{\SV}$  is feasible.

Next, define
\begin{align}
    \lambda_{\SV}
        \isdef
            \left(
                A_{\SV}A_{\SV}^T
            \right)^{-1}
            r_{\SV}.
    \nn
\end{align}
By \eqref{eq:inverse_positive_condition} and $ r_{\SV}>0$, it follows that 
$\lambda_{\SV} \ge 0 .$
Hence, the primal- and dual-feasibility conditions of Theorem~\ref{thm:exactness_characterization} hold, which implies \eqref{eq:main_result}. 
\end{proof}

Theorem~\ref{thm:exactness} separates the two roles of constraint-normal interactions. Condition~\eqref{eq:inverse_positive_condition} guarantees dual feasibility of the multiplier candidate for every positive violated residual vector \(r_{\SV}\). Condition~\eqref{eq:feasibility_preservation_condition} guarantees primal feasibility by ensuring that the correction associated with the violated constraints does not increase the residuals of the constraints that were initially satisfied. 
These conditions are sufficient, but not necessary, for exactness.


\subsection{Geometric Special Cases}

The structural conditions in Theorem~\ref{thm:exactness} can be interpreted through interactions among the affine-inequality normals. 
This subsection presents two simple cases in which these interactions guarantee exactness.
The first corresponds to mutually orthogonal constraint normals, for which the violated-set correction does not alter the residuals of the remaining constraints. 
The second corresponds to a single violated affine inequality, where exactness depends only on the sign of its interaction with the constraints that are already satisfied by $u_{\rm nom}$.


\begin{corollary}[Orthogonal Constraint Normals]
\label{cor:orthogonal}
Suppose that the rows of $A$ are mutually orthogonal, that is, for $i \neq j,$ $A_iA_j^T = 0.$
Then
\begin{align}
    u_{\SV}
        =
            u^*.
\end{align}
\end{corollary}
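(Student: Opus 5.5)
The plan is to verify the hypotheses of Theorem~\ref{thm:exactness}, namely full row rank of $A_{\SV}$ together with conditions \eqref{eq:inverse_positive_condition} and \eqref{eq:feasibility_preservation_condition}, and then invoke that theorem. If $\SV=\emptyset$, then $u^*=u_{\rm nom}$ and there is nothing to prove, as noted in Section~\ref{sec:cbf_projection_filter}, so I assume $\SV\neq\emptyset$.

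\textbf{Step 1: full row rank.} First I handle a degenerate case. If some violated row satisfied $A_i=0$, then $r_i=-b_i>0$. This forces $b_i<0$, and then the inequality $A_i u=0\le b_i$ fails for every $u$, so $\SU=\emptyset$. In that case the projection is undefined, which contradicts the standing assumption that the problem is feasible. Hence every row indexed by $\SV$ is nonzero. Since these rows are also mutually orthogonal, they are linearly independent, so $A_{\SV}$ has full row rank.

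\textbf{Step 2: the Gram matrices.} By orthogonality, $A_{\SV}A_{\SV}^T=\operatorname{diag}(\|A_i\|_2^2)_{i\in\SV}$, and every diagonal entry is strictly positive. Its inverse is therefore diagonal with positive diagonal entries, which gives \eqref{eq:inverse_positive_condition} componentwise. For $i\in\SV^c$ and $j\in\SV$ we have $i\neq j$, so $A_iA_j^T=0$. Thus $A_{\SV^c}A_{\SV}^T=0$, and the product in \eqref{eq:feasibility_preservation_condition} is the zero matrix, which is trivially nonnegative.

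\textbf{Step 3: conclusion.} Theorem~\ref{thm:exactness} now yields $u_{\SV}=u^*$. I expect no real obstacle here. The only delicate point is the degenerate zero-row case in Step 1, which the feasibility of $\SU$ rules out and which I will state explicitly. I may also remark that in this setting $u_{\SV}=u_{\rm nom}-\sum_{i\in\SV}\frac{r_i}{\|A_i\|_2^2}A_i^T$, so each violated constraint is corrected independently without changing the residuals of the others.
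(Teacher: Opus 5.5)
Your proposal is correct and follows the paper's proof: both observe that $A_{\SV}A_{\SV}^T$ is diagonal with positive diagonal entries and that $A_{\SV^c}A_{\SV}^T=0$, and then invoke Theorem~\ref{thm:exactness}. Your Step~1 uses feasibility of $\SU$ to rule out a zero violated row, which justifies full row rank and strict positivity of the diagonal; the paper's proof takes this point for granted, and you make it explicit.
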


\begin{nonumberproof}
Since the rows of $A$ are mutually orthogonal,
$A_{\SV}A_{\SV}^T$ is diagonal. 
Hence,
$\left(
        A_{\SV}A_{\SV}^T
    \right)^{-1}$
is diagonal with strictly positive diagonal entries and
therefore
$\left(
        A_{\SV}A_{\SV}^T
    \right)^{-1}
        \ge
            0.$

Moreover,
$A_{\SV^c}A_{\SV}^T
        =
            0.$
Therefore,
\begin{align}
    A_{\SV^c}A_{\SV}^T
    \left(
        A_{\SV}A_{\SV}^T
    \right)^{-1}
        =
            0
        \ge
            0.
\end{align}
The result then follows directly from Theorem~\ref{thm:exactness}.
\end{nonumberproof}

Thus, when the normals associated with the affine inequalities violated by $u_{\rm nom}$ are orthogonal to the normals associated with the remaining affine inequalities, the correction needed to enforce one such inequality does not change the residual of any other affine inequality.

\begin{corollary}[Single Violated Constraint]
\label{cor:single_violated_constraint}
Suppose that the nominal control violates only the $ i$ th affine
inequality, that is,
$\SV
        =
            \{i\}.$
If
$A_{\SV^c}A_i^T
        \ge
            0,$
then
\begin{align}
    u_{\SV}
        =
            u^*.
\end{align}
\end{corollary}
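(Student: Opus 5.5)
The plan is to specialize Theorem~\ref{thm:exactness} to the case $\SV=\{i\}$ and check its two structural conditions directly. Here $A_{\SV}=A_i$ is a single row, and the full-row-rank hypothesis of Theorem~\ref{thm:exactness} reduces to $A_i\neq 0$.

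First I will establish $A_i \neq 0$. Since $i\in\SV$, the residual satisfies $r_i = A_iu_{\rm nom}-b_i>0$. If $A_i=0$, then $r_i=-b_i>0$, so $b_i<0$, and the $i$th inequality $0\le b_i$ could never hold. In that case $\SU=\emptyset$ and the projection $u^*$ is undefined. Since the paper takes feasibility of $\SU$ as given throughout (as stated in the introduction), $A_i\neq0$. This step is the only point needing care: the corollary does not restate the full-row-rank assumption, so it must be derived from nonemptiness of $\SU$ or read as implicit. I will state explicitly that I use the standing assumption that $\SU$ is nonempty.

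Next I verify \eqref{eq:inverse_positive_condition}. We have $A_{\SV}A_{\SV}^T=\|A_i\|_2^2>0$, a positive scalar, so its inverse $1/\|A_i\|_2^2$ is positive and in particular nonnegative.

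Then I verify \eqref{eq:feasibility_preservation_condition}. Here $A_{\SV^c}A_{\SV}^T(A_{\SV}A_{\SV}^T)^{-1}= A_{\SV^c}A_i^T/\|A_i\|_2^2$. This is a nonnegative vector by the hypothesis $A_{\SV^c}A_i^T\ge0$, since dividing by a positive scalar preserves componentwise signs.

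Both conditions of Theorem~\ref{thm:exactness} therefore hold, and that theorem gives $u_{\SV}=u^*$. Everything after the first step is a direct scalar computation.
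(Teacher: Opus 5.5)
Your proposal is correct and follows the paper's own proof: specialize Theorem~\ref{thm:exactness} to $\SV=\{i\}$, note that $(A_iA_i^T)^{-1}>0$, and observe that $A_{\SV^c}A_i^T/(A_iA_i^T)\ge 0$. Your extra step, deriving $A_i\neq 0$ from nonemptiness of $\SU$ (if $A_i=0$, then $r_i=-b_i>0$ forces $\SU=\emptyset$), is a welcome refinement, since the paper simply asserts $A_iA_i^T>0$ without justification.
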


\begin{nonumberproof}
Since $\SV=\{i\}$,
$A_{\SV}A_{\SV}^T
        =
            A_iA_i^T > 0.$
Thus,
\begin{align}
    \left(
        A_{\SV}A_{\SV}^T
    \right)^{-1}
        =
            \frac{1}{A_iA_i^T}
        >
            0.
\end{align}
Moreover,
\begin{align}
    A_{\SV^c}A_{\SV}^T
    \left(
        A_{\SV}A_{\SV}^T
    \right)^{-1}
        =
            \frac{
                A_{\SV^c}A_i^T
            }{
                A_iA_i^T
            }
        \ge
            0.
\end{align}
Therefore, the conditions of Theorem~\ref{thm:exactness}
hold, and hence $u_{\SV}=u^*$.
\end{nonumberproof}

The condition $ A_{\SV^c}A_i^T\ge 0$  means that each affine inequality satisfied by $ u_{\rm nom}$  has a nonnegative interaction with the affine inequality violated by $ u_{\rm nom}$. 
Therefore, the correction that enforces the $ i$ th affine inequality with equality does not increase the residual of any affine inequality that was already satisfied.

Figure~\ref{fig:single_positive_interaction} shows a configuration in which the nominal control violates only one affine inequality.
Since the interaction between the violated affine inequality and the previously satisfied affine inequality is positive, the violated-set correction moves the control input in a direction that decreases the residual of the previously satisfied inequality. 
Hence, the corrected input remains feasible and coincides with the exact Euclidean projection.

\begin{figure}[h]
\centering
\begin{tikzpicture}[scale=1.0, >=latex]

    \draw[->, gray] (-0.5,0) -- (4.5,0) node[right] {$u_1$};
    \draw[->, gray] (0,-0.5) -- (0,4.0) node[above] {$u_2$};

    \draw[line width=1.1] (2,-0.2) -- (2,3.7)
        node[above] {$A_1u=b_1$};

    \draw[line width=1.1] (0.4,3.6) -- (4.3,-0.3)
        node[right] {$A_2u=b_2$};

    \fill[blue!12] (-0.2,-0.2) -- (2,-0.2) -- (2,2) -- (0.4,3.6) -- (-0.2,3.6) -- cycle;
    \node at (0.85,1.4) {$\SU$};

    \coordinate (unom) at (3.1,0.6);
    \coordinate (uV) at (2,0.6);

    \filldraw[black] (unom) circle (2pt)
        node[below] {$u_{\rm nom}$};

    \filldraw[black] (uV) circle (2pt)
        node[below left] {$u_{\SV}=u^*$};

    \draw[->, line width=1.0] (unom) -- (uV);

    \draw[->, thick] (2.25,2.8) -- +(0.6,0)
        node[right] {$A_1^T$};

    \draw[->, thick] (2.9,1.6) -- +(0.45,0.45)
        node[right] {$A_2^T$};

    \node[align=center] at (3.35,3.25)
        {$A_2A_1^T>0$};

\end{tikzpicture}
\caption{Single violated affine inequality with positive interaction.
The violated-set correction decreases the residual of the nonviolated
affine inequality, and hence $u_{\SV}=u^*$.}
\label{fig:single_positive_interaction}
\end{figure}


Figure~\ref{fig:single_negative_interaction} shows a configuration in which the nominal control again violates only one affine inequality, but the interaction with a previously satisfied affine inequality is negative.
In this case, the violated-set correction increases the residual of the previously satisfied
inequality, so the corrected input is not feasible and does not coincide with the exact Euclidean projection. 
In the configuration shown, the exact projection occurs at the intersection of the two active affine inequalities, although in general it may occur on a different active face of the feasible set.

\begin{figure}[h]
\centering
\begin{tikzpicture}[scale=1.0, >=latex]

    \draw[->, gray] (-0.5,0) -- (4.5,0) node[right] {$u_1$};
    \draw[->, gray] (0,-0.5) -- (0,4.0) node[above] {$u_2$};

    \draw[line width=1.1] (2,-0.2) -- (2,3.7)
        node[above] {$A_1u=b_1$};

    \draw[line width=1.1] (-0.2,0.0) -- (3.6,3.8)
        node[right] {$A_2u=b_2$};

    \fill[blue!12] (-0.2,-0.2) -- (2,-0.2) -- (2,2.2) -- (-0.2,0.0) -- cycle;
    \node at (0.85,0.65) {$\SU$};

    \coordinate (unom) at (3.1,2.0);
    \coordinate (uV) at (2,2.0);
    \coordinate (ustar) at (2,2.2);

    \filldraw[black] (unom) circle (2pt)
        node[below right] {$u_{\rm nom}$};

    \filldraw[black] (uV) circle (2pt)
        node[below right] {$u_{\SV}$};

    \filldraw[black] (ustar) circle (2pt)
        node[above left] {$u^*$};

    \draw[->, line width=1.0] (unom) -- (uV);
    \draw[->, dashed, line width=1.0] (unom) -- (ustar);

    \draw[->, thick] (2.25,0.65) -- +(0.6,0)
        node[right] {$A_1^T$};

    \draw[->, thick] (1.0,2.0) -- +(-0.45,0.45)
        node[left] {$A_2^T$};

    \node[align=center, right] at (3.2,3.25)
        {$A_2A_1^T<0$};

\end{tikzpicture}
\caption{Single violated affine inequality with negative interaction.
The violated-set correction increases the residual of a previously
satisfied affine inequality, so $u_{\SV}$ is not feasible and
$u_{\SV}\neq u^*$.}
\label{fig:single_negative_interaction}
\end{figure}

\section{Online Certification Algorithm}
\label{sec:algorithm}

The preceding results lead to the online safety-filter implementation summarized in Algorithm~\ref{alg:optimization_free_cbf_filter}. 
At each state $x$, the CBF or HOCBF conditions define affine inequalities of the form \eqref{eq:cbf_constraint}.
Given a nominal input $u_{\rm nom}$, the filter first identifies the affine inequalities that are violated by $u_{\rm nom}$. 
It then computes the minimum-norm correction that enforces only those inequalities with
equality.
The candidate is accepted only if it satisfies the primal- and dual-feasibility
conditions of Theorem~\ref{thm:exactness_characterization}. 
If the certificate holds, then the candidate is the exact Euclidean projection onto the full feasible set, and the CBF-QP solution has been recovered in closed form.

If the certificate fails, then the violated-set correction is not certified to be the closest feasible input. 
In this case, an exact recovery step must be used to recover the exact projection, such as the finite active-set search described in Section~\ref{sec:active_set_search} or a standard CBF-QP.
Thus, the algorithm bypasses numerical optimization only when the closed-form candidate is certified to be exact.


\begin{algorithm}[h]
\caption{Certified Closed-Form CBF Safety Filter}
\label{alg:optimization_free_cbf_filter}
\begin{algorithmic}[1]
\Require State $x$, nominal input $u_{\rm nom}$, affine CBF inequality data $A(x)$ and $b(x)$
\Ensure Filtered input $u_{\rm safe}$

\State Compute the residual
\begin{align}
    r
        =
            A(x)u_{\rm nom}-b(x). \nn
\end{align}

\State Determine the index set of affine inequalities violated by $u_{\rm nom}$:
\begin{align}
    \SV
        =
            \{i\in\{1,\ldots,q\}: r_i>0\}. \nn
\end{align}

\If{$\SV=\emptyset$}
    \State Set $u_{\rm safe}=u_{\rm nom}$.
\Else
    \State Form $A_{\SV}(x)$, $b_{\SV}(x)$, and $r_{\SV}$.
     \If{\(A_{\SV}(x)\) does not have full row rank}
        \State Use an exact recovery step, such as Algorithm~\ref{alg:active_search} or a standard CBF-QP.
    \Else
    \State Compute the violated-set correction
    \begin{align}
        u_{\SV}
            =
                u_{\rm nom}
                -
                A_{\SV}(x)^T
                \left(
                    A_{\SV}(x)A_{\SV}(x)^T
                \right)^{-1}
                r_{\SV}. \nn
    \end{align}

    \State Compute the multiplier candidate
    \begin{align}
        \lambda_{\SV}
            =
                \left(
                    A_{\SV}(x)A_{\SV}(x)^T
                \right)^{-1}
                r_{\SV}. \nn
    \end{align}

    \If{$A(x)u_{\SV}\le b(x)$ and $\lambda_{\SV}\ge 0$}
        \State Set $u_{\rm safe}=u_{\SV}$.
    \Else
        \State Use an exact recovery step, such as Algorithm~\ref{alg:active_search} or a standard CBF-QP.
    \EndIf
    \EndIf
\EndIf
\end{algorithmic}
\end{algorithm}

\newpage
\section{Finite Active-Set Search for Exact Projection}
\label{sec:active_set_search}

As shown in the preceding section, the violated-set correction may fail the exactness certificate.
When this occurs, the exact Euclidean projection can still be recovered without invoking a numerical quadratic-program solver by using an algebraic active-set search.  
The idea is to enumerate candidate active sets, compute the corresponding equality projection in closed form, and accept a candidate only when it satisfies the same primal-dual KKT certificate used in Algorithm~\ref{alg:optimization_free_cbf_filter}.

Specifically, let $\mathcal I\subseteq\{1,\ldots,q\}$ denote a candidate active set.
Since the projection is computed in the $m$-dimensional control space, at most $m$ linearly independent affine constraints can be active in a nonredundant active-set representation.  
Thus, even if the projected point lies on more than $m$ constraint boundaries, a linearly independent subset of at most $m$ active constraints is sufficient to certify the same projected input through the KKT conditions.
Therefore, it is sufficient to consider candidate active sets satisfying
\begin{align}
    |\mathcal I|
        \le
            \min\{m,q\}.
\end{align}

For each such set $\mathcal I$, if $A_{\mathcal I}$ has full row rank, define
\begin{align}
    u_{\mathcal I}
        =
            u_{\rm nom}
            -
            A_{\mathcal I}^T
            \left(
                A_{\mathcal I}A_{\mathcal I}^T
            \right)^{-1}
            \left(
                A_{\mathcal I}u_{\rm nom}
                -
                b_{\mathcal I}
            \right),
\end{align}
and the corresponding multiplier candidate
\begin{align}
    \lambda_{\mathcal I}
        =
            \left(
                A_{\mathcal I}A_{\mathcal I}^T
            \right)^{-1}
            \left(
                A_{\mathcal I}u_{\rm nom}
                -
                b_{\mathcal I}
            \right).
\end{align}
The candidate is accepted if $Au_{\mathcal I}\le b,$ and $\lambda_{\mathcal I}\ge 0.$
In that case, the KKT conditions are satisfied, with zero multipliers assigned to all constraints outside $\mathcal I$, and $u_{\mathcal I}$ is the exact Euclidean projection of $u_{\rm nom}$ onto the feasible set.

\begin{algorithm}
\caption{Finite Active-Set Search for Exact Projection}
\label{alg:active_search}
\begin{algorithmic}[1]
\Require Nominal input $u_{\rm nom}$, affine inequality data $A$, $b$
\Ensure Exact projection $u^\ast$

\If{$Au_{\rm nom}\le b$}
\State Set $u^\ast = u_{\rm nom}$.
\Else
\For{each $\mathcal I \subseteq {1,\ldots,q}$ satisfying $|\mathcal I|\le \min({m,q})$}
\If{$A_{\mathcal I}$ has full row rank}
\State Compute
\begin{align}
u_{\mathcal I}
=
u_{\rm nom}
-
A_{\mathcal I}^T
\left(
A_{\mathcal I}A_{\mathcal I}^T
\right)^{-1}
\left(
A_{\mathcal I}u_{\rm nom}
-
b_{\mathcal I}
\right). \nn
\end{align}
\State Compute
\begin{align}
\lambda_{\mathcal I}
=
\left(
A_{\mathcal I}A_{\mathcal I}^T
\right)^{-1}
\left(
A_{\mathcal I}u_{\rm nom}
-
b_{\mathcal I}
\right). \nn
\end{align}
\If{$Au_{\mathcal I}\le b$ and $\lambda_{\mathcal I}\ge 0$}
\State Set $u^\ast = u_{\mathcal I}$.
\State \Return $u^\ast$.
\EndIf
\EndIf
\EndFor
\EndIf
\end{algorithmic}
\end{algorithm}


Note that the number of candidate active sets considered by Algorithm~\ref{alg:active_search} is
\begin{align}
    N_{\rm sets}
        =
            \sum_{\ell=0}^{\min\{m,q\}}
            {q \choose \ell}.
\end{align}
This count includes the empty set, which corresponds to the case in which
$u_{\rm nom}$ is already feasible.  For example, if $m=2$, then
\begin{align}
    N_{\rm sets}
        =
            1+q+\frac{q(q-1)}{2}.
\end{align}
If $m=3$, then
\begin{align}
    N_{\rm sets}
        =
            1+q+\frac{q(q-1)}{2}
            +
            \frac{q(q-1)(q-2)}{6}.
\end{align}

Thus, when the violated-set correction fails its certificate, the exact projection is recovered by enumerating candidate active sets and applying the same primal-dual certificate to each candidate.  
Any candidate that satisfies the certificate is an exact solution of the Euclidean projection problem.

Note that if more than one candidate active set satisfies the certificate, this does not imply that the Euclidean projection is nonunique.  
Since the feasible set $\SU,$ given by \eqref{eq:feasibleSet}, is convex and the objective function $\frac{1}{2}\|u-u_{\rm nom}\|_2^2$ is strictly convex in $u$, the projection problem has a unique minimizer.  
Therefore, all certified candidates correspond to the same projected input.

Finally, note that this algebraic active-set search recovers the exact CBF-QP solution without invoking a generic numerical optimizer, while retaining the certification logic of Algorithm~\ref{alg:optimization_free_cbf_filter}.

\section{Numerical Simulation}
\label{sec:numerical_simulation}

This section illustrates the exactness certificate and finite active-set recovery mechanism using a two-dimensional double-integrator system. 
The example is not intended to demonstrate a new obstacle-avoidance controller, but rather to isolate the projection mechanism introduced in this paper. 
In particular, this example focuses on a two-obstacle case, where multiple affine CBF constraints interact and the violated-set correction can fail the exactness certificate.

Consider the system
\begin{align}
    \dot q &= v, \\
    \dot v &= u,
    \label{eq:double_integrator_sim}
\end{align}
where \(q,v,u\in\mathbb{R}^2\).
The objective is to drive the system from an initial position to a desired goal position while avoiding circular obstacles.

The nominal control signal $u_{\rm nom}$ is generated by the PID controller 
\begin{align} 
    u_{\rm nom} &= K_p e + K_i \int e - K_d v,
\end{align}
where $e \isdef q_{\rm g}-q.$
Since the desired goal position is constant, the term $-K_dv$ provides derivative feedback. 
In all simulations, the gains are set as 
$K_p = 0.8 I_2,
K_i = 0.03 I_2, $ and $
K_d = 1.7 I_2.$

For each circular obstacle with center $c_j$ and radius $R_j,$ define the control barrier function
\begin{align}
    h_j(q)
        =
            \|q-c_j\|^2-R_j^2.
\end{align}
Since $h_j$ has relative degree two with respect to the input $u$, we use the high-order CBF condition
\begin{align}
    \ddot h_j+\alpha_1\dot h_j+\alpha_0 h_j \ge 0.
\end{align}
For the double-integrator dynamics,
\begin{align}
    \dot h_j
        &=
            2(q-c_j)^T v,\\
    \ddot h_j
        &=
            2v^Tv+2(q-c_j)^T u.
\end{align}
Therefore, each obstacle induces the affine inequality
\begin{align}
    -2(q-c_j)^Tu
        \le
            2v^Tv
            +
            \alpha_1\dot h_j
            +
            \alpha_0 h_j.
\end{align}
Equivalently, in the notation of this paper,
\begin{align}
    A_j(x)
        &=
            -2(q-c_j)^T,
    \label{eq:exmpAx}        
    \\
    b_j(x)
        &=
            2v^Tv
            +
            2\alpha_1(q-c_j)^Tv
            +
            \alpha_0\left(\|q-c_j\|^2-R_j^2\right),
    \label{eq:exmpbx}        
\end{align}
where $x \isdef \matl q \\ v \matr.$
In all simulations, we set $\alpha_0=\alpha_1=4.$

The exactness certificate is not guaranteed to hold when multiple affine CBF constraints are present. 
The purpose of this example is to illustrate this limitation, identify the mechanism responsible for the failure, and
show why the certification and exact-recovery steps in Algorithm~\ref{alg:optimization_free_cbf_filter} are necessary.

The system is simulated using a fixed time step of \(0.005\) s over a \(20\) s interval.
The goal is
\begin{align}
    q_{\rm g}
        =
            \begin{bmatrix}
                10 \\ 0
            \end{bmatrix}.
\end{align}
The two circular obstacles are centered at
\begin{align}
    c_1
        =
            \begin{bmatrix}
                3 \\ 0.5
            \end{bmatrix},
    \qquad
    c_2
        =
            \begin{bmatrix}
                7 \\ -1
            \end{bmatrix},
\end{align}
and both have radius $R_1=R_2=1.25.$
The two obstacle-induced affine CBF inequalities are obtained from \eqref{eq:exmpAx}, \eqref{eq:exmpbx} by setting $(c_j,R_j)=(c_1,R_1)$ and $(c_2,R_2)$, respectively.

Figure~\ref{fig:two_obstacle_trajectory} shows the closed-loop trajectory.
The filtered trajectory remains outside both obstacles and converges to the goal.
The figure provides the physical context for the example, where the safety filter enforces the obstacle-avoidance constraints while the nominal controller drives the system toward the goal.

\begin{figure}[h]
    \centering
    \includegraphics[width=1\columnwidth]{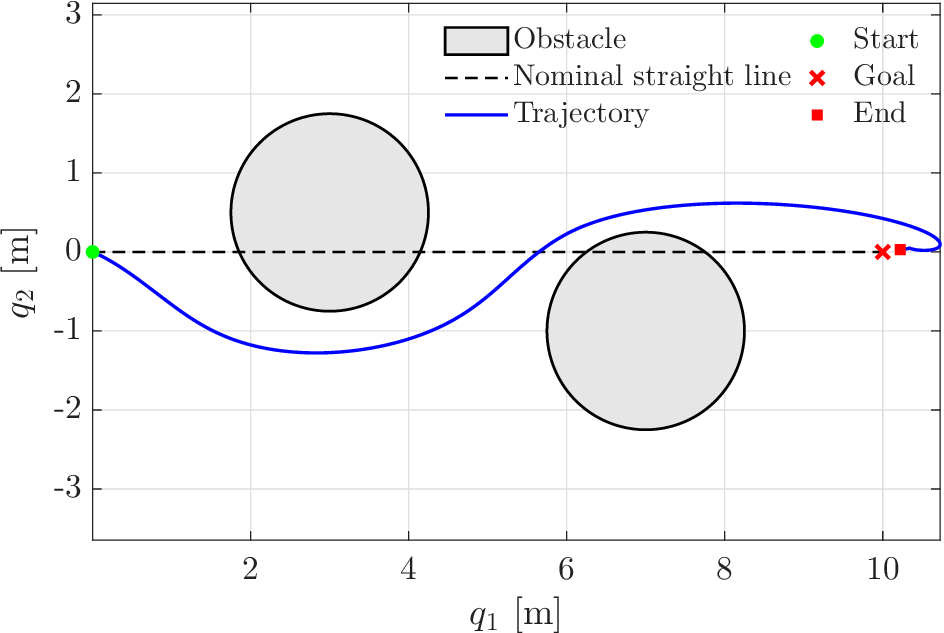}
    \caption{Closed-loop trajectory of the double-integrator system with two circular obstacles.  
    The trajectory remains outside both unsafe sets and reaches the goal.}
    \label{fig:two_obstacle_trajectory}
\end{figure}

Figure~\ref{fig:two_obstacle_filter_status} shows the obstacle barrier functions and the filter status.
Both barrier functions remain nonnegative, which confirms that the closed-loop trajectory remains in the safe set.
The filter status shows three cases: the nominal input is accepted when it already satisfies the CBF inequalities, the violated-set correction is applied when it is certified as exact, and the finite active-set search in Algorithm~\ref{alg:active_search} is used when the certificate fails.
Thus, with multiple affine CBF constraints, the violated-set correction is not always certified as the exact Euclidean projection.

\begin{figure}[h]
    \centering
    \includegraphics[width=1\columnwidth]{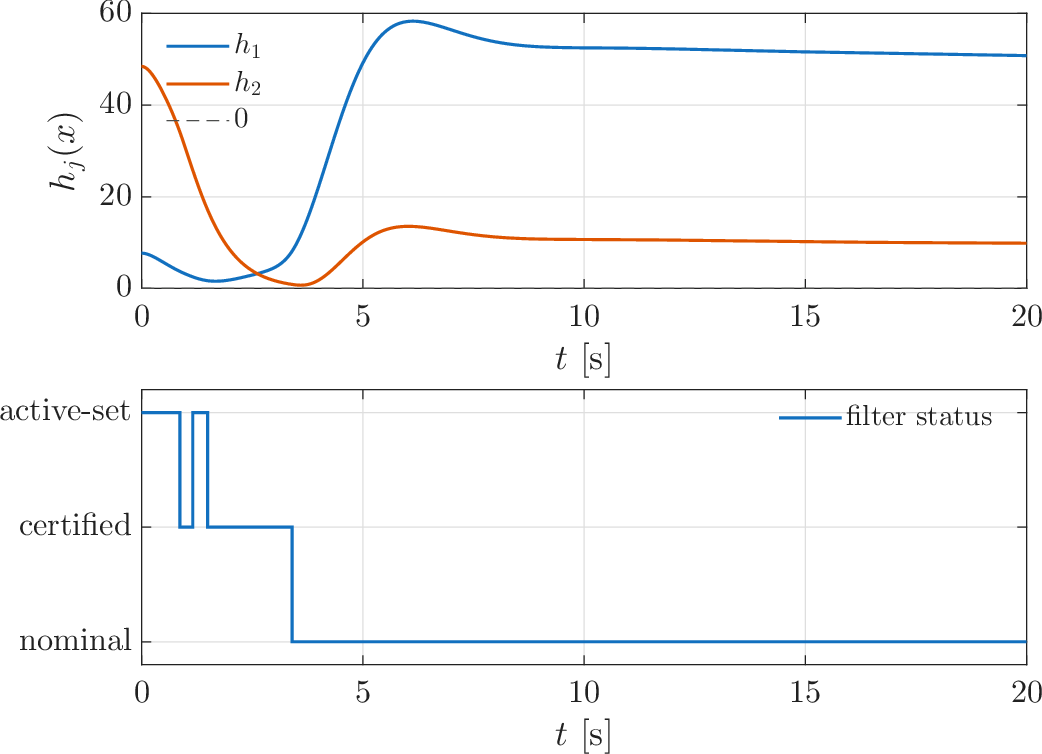}
    \caption{Barrier functions and filter status for the two-obstacle example.
            The lower panel shows the filter status, where nominal input accepted, certified closed-form correction, and finite active-set search are encoded as distinct status values.}
    \label{fig:two_obstacle_filter_status}
\end{figure}

To explain this failure, we evaluate the exactness certificate from
Theorem~\ref{thm:exactness_characterization} along the simulated trajectory.
At each time step, the violated-set candidate $u_{\SV}$ and multiplier
candidate $\lambda_{\SV}$ are computed using
\eqref{eq:violated_projection} and
\eqref{eq:exactness_dual_feasibility}, respectively.

Figure~\ref{fig:two_obstacle_certificate} shows the residuals $r_j$, the
multiplier candidate $\lambda_{\SV}$, and the maximum primal violation
\begin{align}
    \max_i\left(A_i u_{\SV}-b_i\right).
\end{align}
During the interval in which Algorithm~\ref{alg:active_search} is used, the maximum primal violation is nonpositive. 
Thus, the violated-set candidate is feasible with respect to all affine CBF inequalities. 
However, one component of the multiplier candidate is negative. 
Therefore, the KKT conditions for the Euclidean projection fail because of dual infeasibility, not primal infeasibility.

\begin{figure}[t]
    \centering
    \includegraphics[width=1\columnwidth]{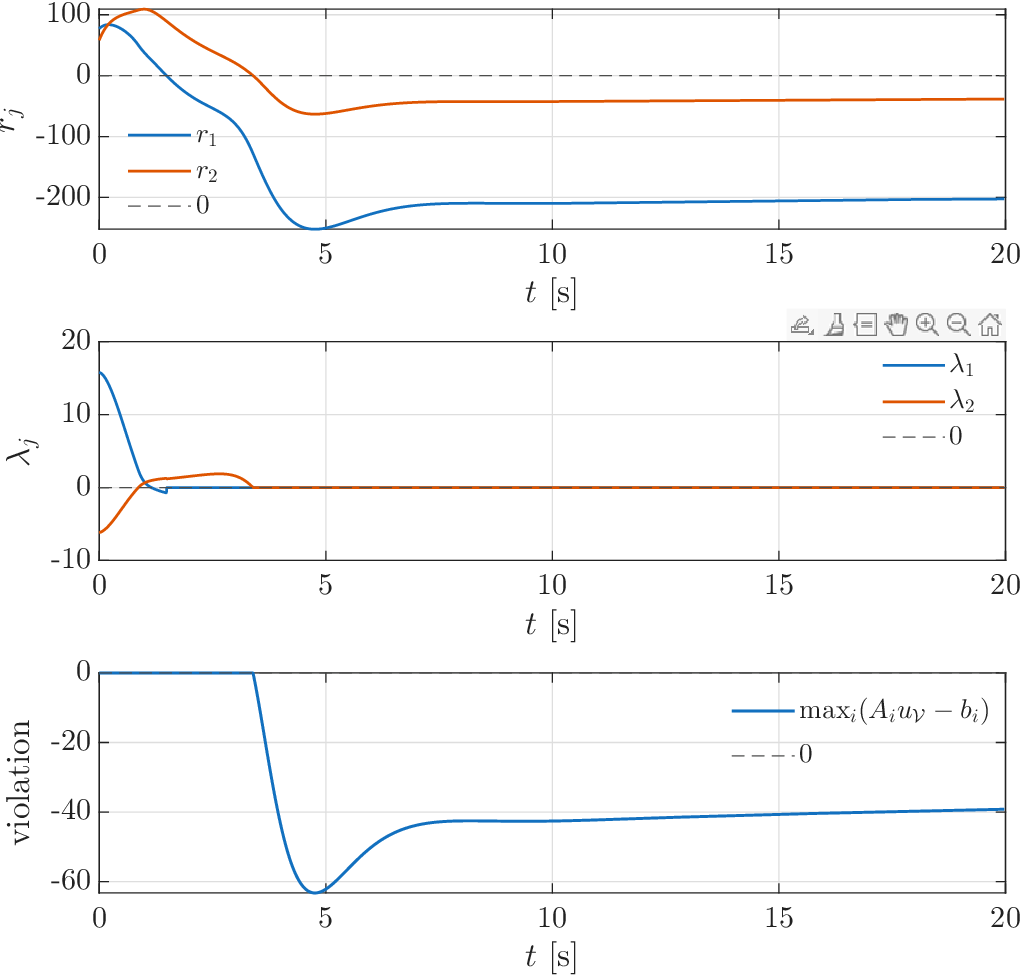}
    \caption{Exactness-certificate diagnostics for the two-obstacle example.
    The violated-set candidate remains primal feasible during the interval in
    which the finite active-set search is used, as indicated by the
    nonpositive maximum primal violation. However, the multiplier candidate has
    a negative component. Hence, the violated-set correction is feasible but is
    not the exact Euclidean projection.}
    \label{fig:two_obstacle_certificate}
\end{figure}

This example highlights an important limitation of extending the
single-constraint intuition to multiple simultaneously violated constraints.
With a single violated affine inequality, the multiplier candidate is a
nonnegative scalar whenever the residual is positive. With two simultaneously
violated constraints, however, exactness depends on the full multiplier vector
\begin{align}
    \lambda_{\SV}
        =
            \left(
                A_{\SV}A_{\SV}^T
            \right)^{-1}
            r_{\SV}.
\end{align}
Even when $r_{\SV}$ is componentwise positive, multiplication by
$\left(A_{\SV}A_{\SV}^T\right)^{-1}$ can produce a multiplier vector with a
negative component. In that case, the violated-set correction may be a safe
control input, but it is not the closest safe input to $u_{\rm nom}$. 
The finite active-set search in Section~\ref{sec:active_set_search}, summarized in
Algorithm~\ref{alg:active_search}, is therefore necessary to recover the
exact CBF-QP solution when the violated-set correction fails the certificate.

The certified closed-form filter was also compared with a standard CBF-QP solver evaluated at the same states and nominal inputs. 
Table~\ref{tab:timing_comparison} shows that the certified filter reduced the average computation time per
filter call by a factor of approximately $38.7$.
The maximum difference between the two filtered inputs over the simulation was
$$
\max_k
\|u_{{\rm cert},k}-u_{{\rm qp},k}\|
=
9.76\times 10^{-8},
$$
and the mean difference was $4.31\times 10^{-9}$. 
Thus, in this example, the certified filter recovers the CBF-QP solution to numerical precision while substantially reducing computation time.

\begin{table}[h]
\centering
\small
\caption{Timing comparison with a standard CBF-QP solver.}
\label{tab:timing_comparison}
\begin{tabular}{lccc}
\hline
Metric & Certified filter & CBF-QP solver & Speedup \\
\hline
Total time [s]      & $6.81\times 10^{-2}$ & $2.63$              & $38.6\times$ \\
Mean time/step [s] & $1.70\times 10^{-5}$ & $6.58\times 10^{-4}$ & $38.7\times$ \\
Max time/step [s]  & $3.81\times 10^{-3}$ & $2.01\times 10^{-2}$ & $5.3\times$ \\
\hline
\end{tabular}
\end{table}

Note that this comparison is intended only to illustrate the computational benefit in this low-dimensional example; the exact speedup depends on the QP solver implementation and hardware.

\section{Conclusion}
\label{sec:conclusion}

This letter characterized when the Euclidean projection associated with a CBF safety filter can be recovered from the affine inequalities violated by a nominal control input.
The violated-set correction enforces the currently violated affine inequalities with equality and can be computed in closed form, but it need not coincide with the exact projection onto the full
feasible set.
A necessary and sufficient exactness certificate was derived in terms of primal feasibility and dual feasibility. 
A structural sufficient condition was then given in terms of interactions among affine-inequality normals, together with geometric special cases that clarify when the violated-set correction is guaranteed to be exact.

The resulting online safety filter first applies the violated-set correction and then certifies whether it is the exact CBF-QP solution. 
When the certificate fails, the exact projection can still be recovered using the finite active-set search proposed in this letter. 
Numerical simulations demonstrated that the violated-set correction can remain feasible while failing to be the exact projection because of dual infeasibility. 
The simulations also showed that the certified closed-form filter with finite active-set search recovered the standard CBF-QP solution to numerical precision while reducing computation time in the two-obstacle example.


\bibliographystyle{elsarticle-num}
\bibliography{cbf_bib}

\end{document}

%% file: PackagesCommands.tex
\usepackage{amsmath} 
\usepackage{amsfonts}
\usepackage{amsthm}

\newtheorem{theorem}{Theorem}

\newtheorem{corollary}{Corollary}[section]
\theoremstyle{definition}

\usepackage{float} 
\usepackage[skip=0em,font=footnotesize]{caption}

\usepackage{tikz}
\usetikzlibrary{shapes,arrows,calc,positioning}
\tikzstyle{bigblock} = [draw, fill=blue!20, rectangle, 
    minimum height=6em, minimum width=8em]
\tikzstyle{medblock} = [draw, fill=blue!20, rectangle, 
    minimum height=4em, minimum width=4em]    
\tikzstyle{mux} = [draw, fill=black!20, rectangle, 
    minimum height=5em, minimum width=0.1em]    
\tikzstyle{smallblock} = [draw, fill=blue!20, rectangle, 
    minimum height=2em, minimum width=3em]
    
\tikzstyle{data_block} = [draw, fill=green!20, rectangle, 
    minimum height=2em, minimum width=3em]
\tikzstyle{ops_block} = [draw, fill=blue!20, rectangle, 
    minimum height=2em, minimum width=3em]    
\tikzstyle{est_block} = [draw, fill=red!20, rectangle, 
    minimum height=2em, minimum width=3em]    
    
\tikzstyle{sum} = [draw, fill=blue!20, circle, node distance=1cm,minimum height=0.5cm]
\tikzstyle{signal} = [coordinate]
\tikzstyle{pinstyle} = [pin edge={to-,thin,black}]
\tikzstyle{block} = [draw, fill=blue!20, rectangle, 
    minimum height=3em, minimum width=9em]
\tikzstyle{blockS} = [draw, fill=blue!20, rectangle, 
    minimum height=3em, minimum width=4em]    
\tikzstyle{input} = [coordinate]
\tikzstyle{output} = [coordinate]
\usetikzlibrary{matrix}

\usetikzlibrary{positioning}
\usetikzlibrary{math}

\usepackage{hyperref}
\usepackage{xcolor}
\hypersetup{
    colorlinks,
    linkcolor={blue!100!black},
    citecolor={blue!50!black},
    urlcolor={blue!80!black}
}

\newcommand{\bc}{\begin{center}}
\newcommand{\ec}{\end{center}}
\newcommand{\benum}{\begin{enumerate}}
\newcommand{\eenum}{\end{enumerate}}
\newcommand{\nn}{\nonumber}
\newcommand{\matl}{\left[ \begin{array}}
\newcommand{\matr}{\end{array} \right]}

\renewcommand{\matl}{\begin{bmatrix}}
\renewcommand{\matr}{\end{bmatrix}}

\newcommand{\matls}{\left[ \begin{smallmatrix}}
\newcommand{\matrs}{\end{smallmatrix} \right]}
\newcommand{\isdef}{\stackrel{\triangle}{=}}

\newcommand{\BBR}{{\mathbb R}}

\newcommand{\real}{{\mathbb{R}}}

\newcommand{\SU}{{\mathcal U}}
\newcommand{\SV}{{\mathcal V}}

\DeclareMathOperator*{\argmin}{arg\,min}

\newenvironment{nonumberproof}
{\begin{proof}\notag}
{\end{proof}}